\definecolor{dkgreen}{rgb}{0,0.6,0}
\definecolor{forestgreen}{RGB}{0,100,50}
\definecolor{redd}{RGB}{76,0,153}
\lstdefinestyle{customxml}
{language=XML,
keywordstyle=\color{green},
basicstyle=\ttfamily\scriptsize,
morekeywords={id, specType},
mathescape=true,
escapeinside={/*@}{@*/},
tagstyle=\color{forestgreen},
keywordstyle=\color{redd},
stringstyle=\ttfamily\color{blue},
frame=single,
numbers=none
}
\lstdefinestyle{customjava}
{language=java,
keywordstyle=\color{blue},
basicstyle=\ttfamily\scriptsize,
morekeywords={String},
mathescape=true,
escapeinside={/*@}{@*/},
commentstyle=\color{dkgreen},   
keywordstyle=[2]{\color{red}},
frame=single,
numbers=none
}
\def\cleardoublepage{\clearpage\if@twoside \ifodd\c@page\else
    \hbox{}
    \thispagestyle{empty}
    \newpage
    \if@twocolumn\hbox{}\newpage\fi\fi\fi}
\makeatother \clearpage{\pagestyle{plain}\cleardoublepage}
\newcommand{\fig}[2][]{Figure~\ref{fig:#2}\ensuremath{#1}}
\newcommand{\prop}[1]{Proposition~\ref{prop:#1}}
\newcommand{\mdash}[1][]{---#1}
\newcommand{\ie}[1][\xspace]{i.e.,#1}
\newcommand{\eg}[1][\xspace]{e.g.,#1}
\newcommand{\bydef}[1]{\ensuremath{\stackrel{def}{#1}}}
\newcommand{\oftype}{\ensuremath{\!:\!}}
\newcommand{\non}[1]{\ensuremath{\overline{#1}}}
\newcommand{\true}{\ensuremath{\mathit{true}}}
\newcommand{\false}{\ensuremath{\mathit{false}}}
\newcommand{\require}{\ensuremath{\ \ \mathbf{Require}\ \ }}
\newcommand{\accept}{\ensuremath{\ \ \mathbf{Accept}\ \ }}
\newcommand{\cB}{\ensuremath{\mathcal{B}}}
\newcommand{\cC}{\ensuremath{\mathcal{C}}}
\newcommand{\cD}{\ensuremath{\mathcal{D}}}
\newcommand{\sN}{\ensuremath{\mathbb{N}}}
\newcommand{\cT}{\ensuremath{\mathcal{T}}}
\newlength{\ruleht}
\newcommand{\compi}[1]{\ensuremath{\overline{#1}\:}}
\newcommand{\remove}[1]{}
\newcounter{lctr}
\newcommand{\cm}{\ensuremath{\Gamma}}
\definecolor{darkgreen}{rgb}{0.05, 0.5, 0.06}
\definecolor{dkgreen}{rgb}{0,0.6,0}
\definecolor{forestgreen}{RGB}{0,100,50}
\definecolor{redd}{RGB}{76,0,153}
\lstdefinestyle{customxml}
{language=XML,
keywordstyle=\color{green},
basicstyle=\ttfamily\scriptsize,
morekeywords={id, specType},
mathescape=true,
escapeinside={/*@}{@*/},
tagstyle=\color{forestgreen},
keywordstyle=\color{redd},
stringstyle=\ttfamily\color{blue},
frame=single,
numbers=none
}
\lstdefinestyle{customjava}
{language=java,
keywordstyle=\color{blue},
basicstyle=\ttfamily\scriptsize,
morekeywords={String},
mathescape=true,
escapeinside={/*@}{@*/},
commentstyle=\color{dkgreen},   
keywordstyle=[2]{\color{red}},
frame=single,
numbers=none
}
\newtheorem{theorem}{Theorem}[section]
\newtheorem{example}[theorem]{Example}
\newtheorem{proposition}[theorem]{Proposition}
\newtheorem{corollary}[theorem]{Corollary}
\newtheorem{lemma}[theorem]{Lemma}
\title{DesignBIP: A Design Studio for Modeling and Generating Systems with BIP}
\author{Anastasia Mavridou 
\institute{Institute for Software Integrated Systems\\
Vanderbilt University\\
Nashville, TN, USA}
\email{anastasia.mavridou@vanderbilt.edu}
\and
Joseph Sifakis 
\institute{Verimag-B\^atiment IMAG\\
Universit\`e Grenoble Alpes\\
38401 St Martin d`H\`eres, France
}
\email{Joseph.Sifakis@imag.fr
}
\and
Janos Sztipanovits 
\institute{Institute for Software Integrated Systems\\
Vanderbilt University\\
Nashville, TN, USA
}
\email{janos.sztipanovits@vanderbilt.edu
}
}
\begin{document}
\setlength{\marginparwidth}{2.2cm}

\maketitle

\begin{abstract}
The Behavior-Interaction-Priority (BIP) framework --- rooted in rigorous semantics --- allows the construction of systems that are correct-by-design. BIP has been effectively used for the construction and analysis of large systems such as robot controllers and satellite on-board software. Nevertheless, the specification of BIP models is done in a purely textual manner without any code editor support. To facilitate the specification of BIP models, we present DesignBIP, a web-based, collaborative, version-controlled design studio. To promote model scaling and reusability of BIP models, we use a graphical language for modeling parameterized BIP models with rigorous semantics. We present the various services provided by the design studio, including model editors, code editors, consistency checking mechanisms, code generators, and integration with the JavaBIP tool-set.

\end{abstract}

\section{Introduction}


Modeling languages are often used for designing complex systems. Using dedicated design studios allows increasing the understandability and usability of modeling languages, as well as decreasing development costs by eliminating errors at design time. 
%
%
Design studio components can be organized in the following three categories: 1)~\emph{semantic integration}, 2)~\emph{service integration}, and 3)~\emph{tool integration}. Semantic integration components comprise the domain of the modeling language, \ie its \emph{metamodel} that explicitly specifies the building blocks of the language and their relations. Service integration components include dedicated model editors, code editors, and GUI/Visualization components for modeling and simulating results. Additionally, service integration components include model transformation and code generation services, model repositories, and version control services. Finally, tool integration components consist in integrated tools such as run-times and verification tools.

Figure~\ref{fig:studioFlow} shows the main steps of the workflow of a design studio. Initially, models are designed using dedicated model editors. Optionally, design patterns stored in model repositories may be used to simplify the modeling process. 
Next, the checking loop starts (step 1), where the models are checked for conformance. If the required conformance conditions are not satisfied by the model, the checking mechanism must point back to the problematic nodes of the model in the model editor and inform the developer of the inconsistency causes to facilitate model refinement. Finally, when the conformance conditions are satisfied (step 2), the refined models may be analyzed and/or executed (step 3) by using integrated, into the design studio, third party tools. The output of the tools is then collected and sent back to the model editors (step 4) for visualization of analysis or execution results.

\begin{figure} [t]
  \centering
  \includegraphics[scale=0.35]{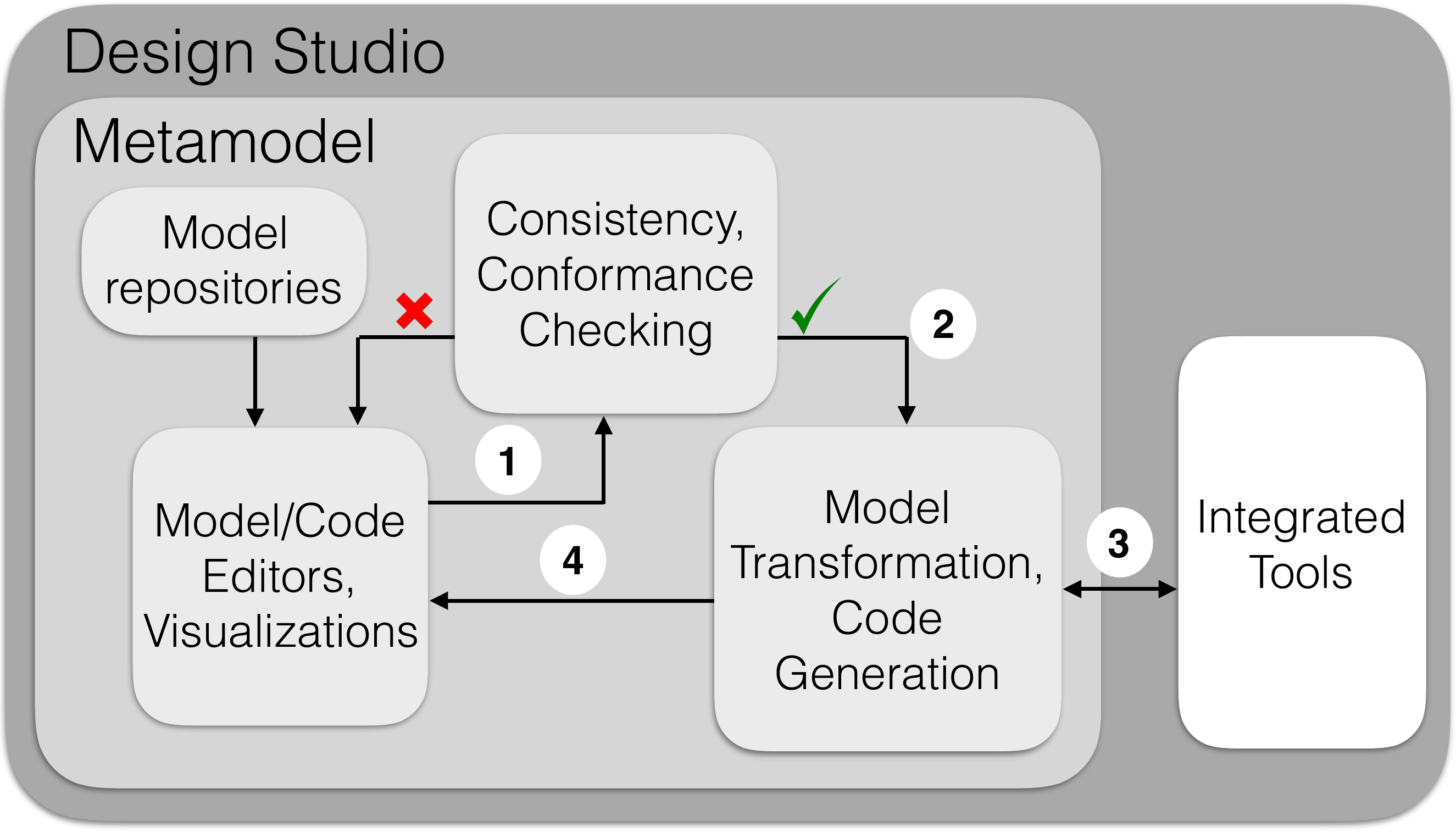}
  \caption{Main design studio components and work-flow}
  \label{fig:studioFlow}
\end{figure}

We present the DesignBIP studio for modeling and generating systems with the BIP~\cite{Basu11-RCBSD} framework. BIP comprises a language with rigorous operational semantics and a dedicated tool-set including code generators, run-time support tools, \ie BIP engines, and verification tools~\cite{dfinder, esst4bip}. Depending on the application domain, BIP offers several compilation chains, targeting different execution platforms and programming languages such as C++~\cite{Basu11-RCBSD} Java~\cite{SPE:SPE2495} Haskell, and Scala~\cite{edelmannfunctional}.

The specification of models in the BIP framework is done by using the BIP language in a textual manner~\cite{BIPGrammar} without offering any dedicated code editors. Thus, developing large systems with the BIP toolset can be challenging and error prone. In DesignBIP we have opted for a graphical language to enhance readability and easiness of expression. DesignBIP offers a complete modeling solution, in which we have integrated the tools offered by JavaBIP, the Java-based implementation of BIP~\cite{SPE:SPE2495}. Relying on the observation that systems are usually built from multiple instances of the same component type, we propose a parameterized graphical language for BIP that enhances scalability and reduces the model size.

DesignBIP is a web-based, collaborative, version controlled design studio based on~WebGME~\cite{maroti2014next}. DesignBIP allows real-time collaboration between multiple developers. Project changes are committed and versioned, which enables branching, merging and viewing the history of a project. DesignBIP\footnote{https://cps-vo.org/group/BIP} is easily accessed through a web interface and is open source\footnote{https://github.com/anmavrid/DesignBIP}.
Our contributions are as follows:
\begin{compactitem}
\item We extend \emph{architecture diagrams}~\cite{MavridouBBS16}, a graphical parameterized language, to accommodate the specification of BIP parameterized models. 
\item We prove a set of necessary and sufficient conditions for checking the encodability of parameterized BIP graphical models into logical formulas. 
\item We study the model transformation from graphical models to logical formulas and develop code generation plugins.
\item We develop dedicated BIP model editors, code editors, and model repositories.
\item We integrate the JavaBIP engine and provide visualization of its output.
\end{compactitem}



\emph{Paper organization:}
Section 2 describes the BIP language. Section 3 describes the parameterized graphical language of DesignBIP. Section 4 describes  service integration components, \ie model and code editors, model repositories, and code generators. Section 5 describes the integration with the JavaBIP engine. 
Section 6 discusses related work. Section 7 discusses concluding remarks and future work.

\section{The BIP Language}
\label{secn:bip}

Component behavior in BIP is described by Labeled Transition Systems (LTS). 
LTS transitions are of three types: \emph{enforceable}, \emph{spontaneous}, and \emph{internal}\footnote{JavaBIP includes all three, whereas BIP1 and BIP2 include the enforceable and internal types.}. 
Enforceable transitions are handled by the BIP-engine and are labeled with \emph{ports}. Ports form the interface of a component and are used to define interactions with other components. 
 Spontaneous transitions take into account changes in the environment and, thus, they are not handled by the BIP-engine but rather executed after detection of external events. Finally, internal transitions allow a component to update its state based on internal information. 
 
 \begin{figure*} [t]
\centering
\hfill
\begin{subfigure}[b]{0.23\textwidth}
  \begin{tikzpicture}[shorten >=1pt,node distance=.7cm,>=stealth'
      ,initial text=
      ,every state/.style={draw=green,thick}
      ,group/.style = {draw=darkgreen, thin, rectangle, minimum width=2.5cm}
      ,port/.style = {font=\small, minimum size=5mm}
      ,legend/.style = {font=\bf}
    ]

    \node[port] (rndv2) {synchron};
    \node[port, node distance=1.5cm] (brdc2) [right of=rndv2]{trigger};
    \draw [style=-*, thick, darkgreen] ($(rndv2.north)+(0,.8cm)$) -| (rndv2.north);
    \draw [style=-triangle 45 reversed, thick, darkgreen] ($(brdc2.north)+(0,.8cm)$) -| (brdc2.north);
  \end{tikzpicture}
  \caption{Port attributes}
  \label{fig:attributes}
\end{subfigure}
\hfill
\begin{subfigure}[b]{0.18\textwidth}
  \centering
  \begin{tikzpicture}[shorten >=1pt,node distance=.7cm,>=stealth'
      ,initial text=
      ,every state/.style={draw=darkgreen,thick}
      ,group/.style = {draw=darkgreen,thin,rectangle, minimum width=2.5cm}
      ,port/.style = {font=\small, minimum size=5mm}
      ,legend/.style = {font=\bf}
    ]

    \node[port] (h){};

    \node[port, node distance=.3cm] (ar) [left of=h]{$s$};
    \node[port, node distance=.3cm] (br) [right of=h]{$r_1$};
    \node[port, node distance=.6cm] (cr) [right of=br]{$r_2$};
    \draw [style=-*, thick, darkgreen]($(h.west)+(0,1cm)$)-|(ar.north);
    \draw [style=-*, thick, darkgreen] ($(h.west)+(0,1cm)$) -|(br.north);
    \draw [style=-*, thick, darkgreen] ($(h.west)+(0,1cm)$) -|(cr.north);

    \node[port, node distance=1em] (rndv) [below of=br]{$\{sr_1r_2\}$};
  \end{tikzpicture}
  \caption{Rendezvous}
  \label{fig:rendezvous}
\end{subfigure}
\hfill
\begin{subfigure}[b]{0.18\textwidth}
  \begin{tikzpicture}[shorten >=1pt,node distance=.7cm,>=stealth'
      ,initial text=
      ,every state/.style={draw=darkgreen,thick}
      ,group/.style = {draw=darkgreen,thin,rectangle, minimum width=2.5cm}
      ,port/.style = {font=\small, minimum size=5mm}
      ,legend/.style = {font=\bf}
    ]

    \node[port] (h2){};
    \node[port, node distance=.3cm] (abr) [left of=h2]{$s$};
    \node[port, node distance=.3cm] (bbr) [right of=h2]{$r_1$};
    \node[port, node distance=.6cm] (cbr) [right of=bbr]{$r_2$};
    \draw [style=-triangle 45 reversed, thick, darkgreen]($(h2.west)+(0,1cm)$)-|(abr.north);
    \draw [style=-*, thick, darkgreen] ($(h2.west)+(0,1cm)$) -|(bbr.north);
    \draw [style=-*, thick, darkgreen] ($(h2.west)+(0,1cm)$) -|(cbr.north);

    \node[port, node distance=1em] (brdc) [below of=bbr]{$\{s, sr_1, sr_2, sr_1r_2\}$};
  \end{tikzpicture}
  \caption{Broadcast}
  \label{fig:broadcast}
\end{subfigure}
\hspace*{\fill}

\bigskip
\begin{subfigure}[b]{0.62\textwidth}
  \begin{tikzpicture}[shorten >=1pt,node distance=.7cm,>=stealth'
      ,initial text=
      ,every state/.style={draw=darkgreen,thick}
      ,group/.style = {draw=darkgreen,thin,rectangle, minimum width=2.5cm}
      ,port/.style = {font=\small, minimum size=5mm}
      ,legend/.style = {font=\bf}
    ]

    \node[port] (start) {};
    \node[port, node distance=1.5cm](h4)[left of=start]{};

    \node[port, node distance=1cm] (ab) [below of=h4] {$r_1$};
    \node[port, node distance=1cm] (ac) [right of=ab]{$r_2$};
    \node[port, node distance=1cm] (aa) [left of=ab]{$s$};

    \node[port, node distance=1.4cm] (leg4) [below of=h4] {$\{r_1r_2, sr_1r_2\}$};

    \draw [style=-*, thick, darkgreen] ($(h4.south)+(0,.1cm)$) -- ++(right:.5cm) -| (ac.north);
    \draw [style=-*, thick, darkgreen]  ($(h4.south)+(0,.1cm)$) -| (ab.north);
    \draw [style=-triangle 45 reversed, thick, darkgreen] ($(h4.south)+(0,.5cm)$) -- ++(right:.2cm) -| ($(h4.south)+(+.5,0cm)$);
    \draw [style=-*, thick, darkgreen]  ($(h4.south)+(0,.5cm)$) -| (aa.north);


    \node[port, node distance=1.5cm](baa) [right of=start]{};
    \node[port, node distance=1cm] (bb) [below of=baa]{$r_1$};
    \node[port, node distance=1cm] (bc) [right of=bb]{$r_2$};
    \node[port, node distance=1cm] (ba) [left of=bb]{$s$};

    \draw [style=-*, thick, darkgreen] ($(baa.south)+(0,.1cm)$) -- ++(right:.5cm) -| (bc.north);
    \draw [style=-*, thick, darkgreen]  ($(baa.south)+(0,.1cm)$) -| (bb.north);
    \draw [style=-*, thick, darkgreen] ($(baa.south)+(0,.5cm)$) -- ++(right:.2cm) -| ($(baa.south)+(+.5,0cm)$);
    \draw [style=-triangle 45 reversed, thick, darkgreen]  ($(baa.south)+(0,.5cm)$) -| (ba.north);

    \node[port, node distance=1.4cm] (leg5) [below of=baa] {$\{s, sr_1r_2\}$};


    \node[port, node distance=4.5cm](caa) [right of=start]{};
    \node[port, node distance=1cm] (cb) [below of=caa]{$r_1$};
    \node[port, node distance=1cm] (cc) [right of=cb]{$r_2$};
    \node[port, node distance=1cm] (ca) [left of=cb]{$s$};

    \draw [style=-*, thick, darkgreen] ($(caa.south)+(0,.1cm)$) -- ++(right:.5cm) -| (cc.north);
    \draw [style=-triangle 45 reversed, thick, darkgreen]  ($(caa.south)+(0,.1cm)$) -| (cb.north);
    \draw [style=-*, thick, darkgreen] ($(caa.south)+(0,.5cm)$) -- ++(right:.2cm) -| ($(caa.south)+(+.5,0cm)$);
    \draw [style=-triangle 45 reversed, thick, darkgreen]  ($(caa.south)+(0,.5cm)$) -| (ca.north);

    \node[port, node distance=1.4cm] (leg6) [below of=caa] {$\{s, sr_1, sr_1r_2\}$};
  \end{tikzpicture}
  \caption{Hierarchical connectors}
  \label{fig:hierarchical}
\end{subfigure}
\caption{BIP connectors and their associated interaction sets}
\label{fig:connectors}
\end{figure*}
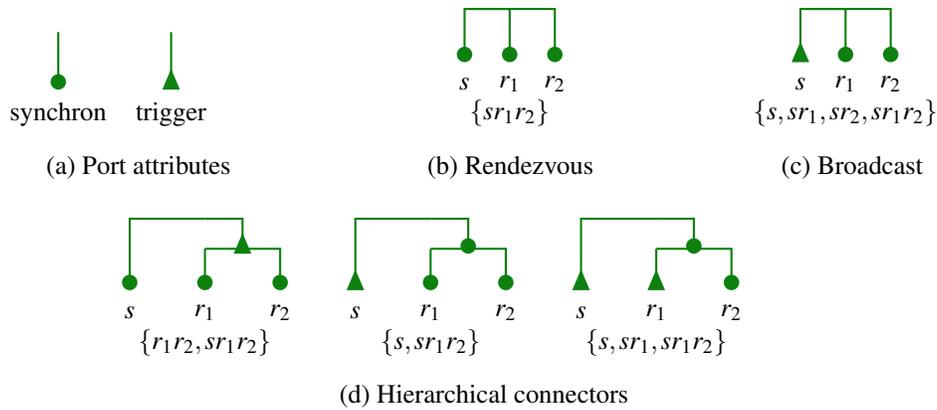
 
An \emph{interaction} in BIP is a non-empty set of ports that defines allowed synchronization of actions among components. BIP interactions represent a clean, abstract concept of \emph{architecture} which is separated from component behavior.
Interaction models can be represented in many equivalent ways. Among these are connectors~\cite{BliSif07-acp-emsoft} and Boolean formulas~\cite{bliudze2010causal} on variables representing port participation in interactions. Connectors are most appropriate for graphical design and interaction representation, whereas Boolean formulas are most appropriate for efficient encoding and manipulation by the BIP-engine. 

BIP connectors contain ports, which form their interface. Each port of a connector has an attribute \emph{trigger} (represented by a triangle, \fig{attributes}) or \emph{synchron} (represented by a bullet, \fig{attributes}). 
Given a connector involving a set of ports $\{p_1,...,p_n\}$, the set of its interactions is defined as follows: an interaction is any non-empty subset of $\{p_1,...,p_n\}$ which contains some port that is a trigger (\fig{broadcast}); otherwise, (if all ports are synchrons) the only possible interaction is the maximal one that is, $\{p_1,...,p_n\}$ (\fig{rendezvous}). 
The same principle is recursively extended to hierarchical connectors, where one interaction from each subconnector is used to
form an allowed interaction according to the synchron/trigger typing of the connector nodes (\fig{hierarchical}).  

Alternatively, \emph{interaction logic} can be used to define interaction models.
The propositional interaction logic (PIL) is defined by the grammar:
\begin{align*}
\phi ::= \true\ |\ p\ |\ \compi \phi\ |\ \phi \vee \phi
\,,
\end{align*}
with any $p \in P$, where $P$ is the set of ports of a BIP system. Conjunction  is defined as follows:  $\phi_1 \wedge \phi_2 \bydef{=} \compi{(\compi{\phi_1} \vee \compi{\phi_2})}$. 
To simplify notation, we omit conjunction in monomials, \eg writing $sr_1r_2$ instead of $s \wedge r_1 \wedge r_2$.
%
%
Let $\gamma$ be a non-empty set of interactions. The meaning of a PIL formula $\phi$ is defined by the satisfaction relation: $\gamma \models \phi$ iff for all $a \in \gamma$, $\phi$ evaluates to $\true$ for the valuation induced by $a$: $p=\true$, for all $p\in a$ and $p=\false$, for all $p \not\in a$.

Consider the \emph{Star architecture} shown in \fig{starPIL}, where a single component $C$ acts as the center, and three other components $S_1,\ S_2,\ S_3$ communicate with the center through binary rendezvous connectors
. Component $C$ has a single port $p$ and all other components have a single port $q_i\ (i=1,2,3)$. The corresponding PIL formula is: $p q_1\compi{q_2}\compi{q_3} \vee p \compi{q_1} q_2\compi{q_3} \vee p \compi{q_1} \compi{q_2} q_3$.

\begin{figure} [t]
  \centering
  \includegraphics[scale=4]{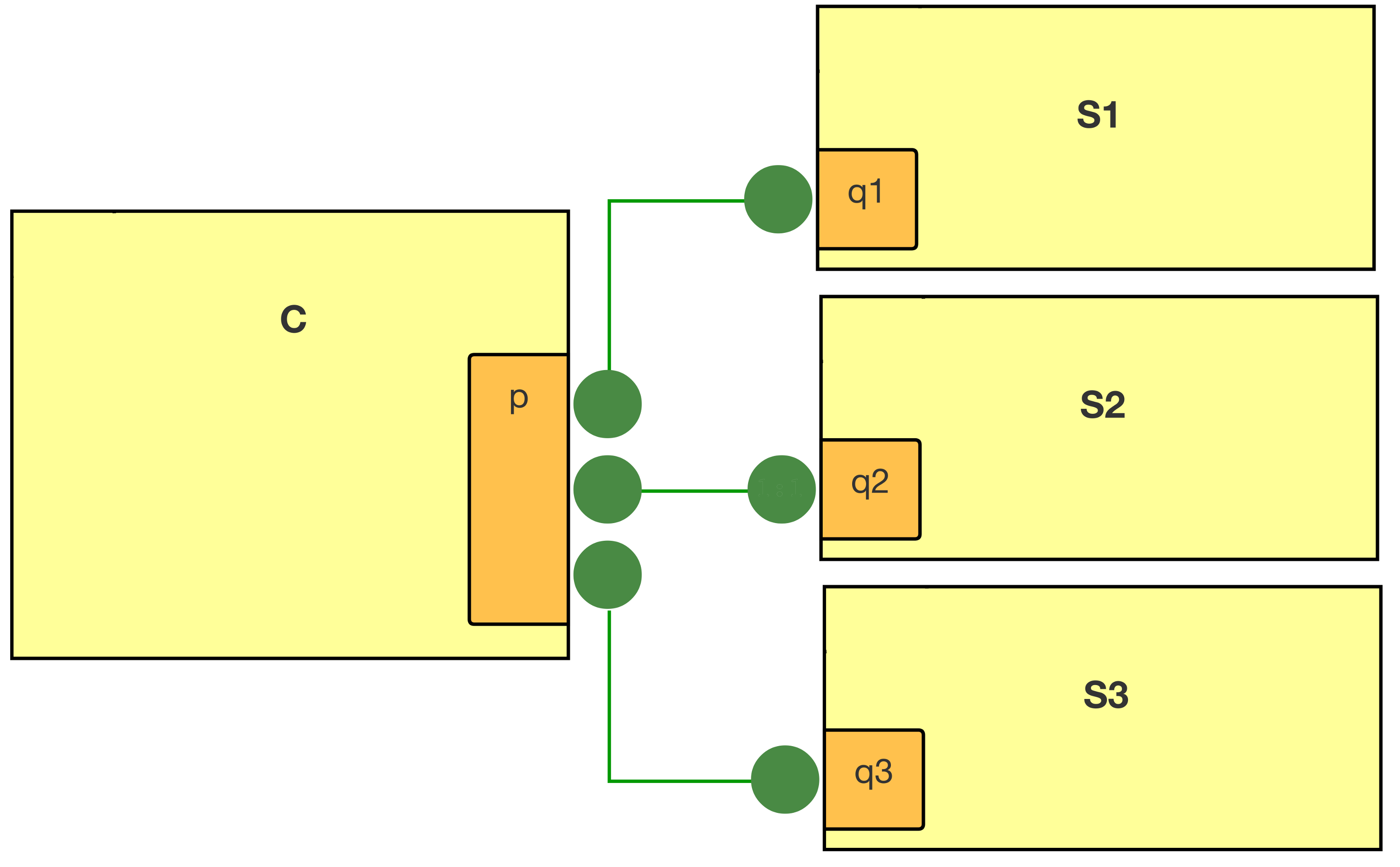}
  \caption{A Star architecture}
  \label{fig:starPIL}
\end{figure}

To define interactions independently from the number of component instances, PIL can be extended with quantification over components~\cite{dybip}. This extension is particularly useful because, in practice, systems are built from multiple component instances of the same component type.
Similarly to~\cite{dybip}, JavaBIP uses a macro-notation based on FOIL that includes two macros. 

The \textbf{Require} macro defines ports required for interaction. Let $T_1, T_2 \in \cT$ be two component types. For instance:
\begin{align*}
  &T_1.p \require T_2.q\ T_2.q\ ;\ T_2.r\,,
\end{align*}
means that, to participate in an interaction, each of the ports $p$ of component instances of type $T_1$ requires either the participation of \emph{precisely two} of the ports $q$ of component instances of type $T_2$ or one instance of $r$. Notice the semicolon in the macro that separates the two options. 

The \textbf{Accept} macro defines optional ports for participation, \ie it defines the boundary of interactions. This is expressed by explicitly excluding from interactions all the ports that are not optional. 
For instance, if ${p,q,r}$  is the set of port types of component types $T_1, T_2 \in \cT$ then:
\begin{align*}
	\label{eq:accept}
	 &T_1.p \accept T_2.q\,,
\end{align*}
means that instances of $r$ are excluded from interaction with instances of $p$.
%
%
To illustrate the use of the macros, let us define the Star architecture style with Require/Accept:
\begin{align*}
  \mathtt{S.q} &\require \mathtt{C.p}
  &\mathtt{S.q} &\accept \mathtt{C.p}
  \\
  \mathtt{C.p} &\require \mathtt{S.q}
  &\mathtt{C.p} &\accept \mathtt{S.q}
\end{align*}

The syntax and semantics of first-order interaction logic (FOIL) as well as the Require/Accept macronotation are presented in greater detail in the technical report~\cite{mavridouDesignBIPReport}.

\section{Semantic Integration Components}

We present the BIP parameterized graphical language that was integrated in DesignBIP. The DesignBIP metamodel can be found in the technical report~\cite{mavridouDesignBIPReport}.
\subsection{Architecture Diagrams for BIP}
\label{sec:simplediagrams}


Architecture diagrams~\cite{MavridouBBS16} is a parameterized graphical language for the description of 
the structure of a system by showing the system's component types and their attributes for coordination.
We extend the definition of architecture diagrams with triggers and synchrons to define BIP connectors. 
A BIP architecture diagram consists of a set of \emph{component types} and a set of \emph{connector motifs}. Each component type $T$ is characterized by a set of \emph{port types} $T.P$ and a \emph{cardinality} parameter $n$, which specifies the number of instances of $T$. 
 \fig{ex1} shows an architecture diagram consisting of two component types $T_1$ and $T_2$ with $n_1$ and $n_2$ instances and port types $p$ and $q$, respectively.  Instantiated components have port instances $p_i$, $q_j$ for $i,j$ belonging to the intervals $[1,n_1]$, $[1,n_2]$, respectively.
 
 \begin{figure} [t]
	\centering
	\includegraphics[scale=1]{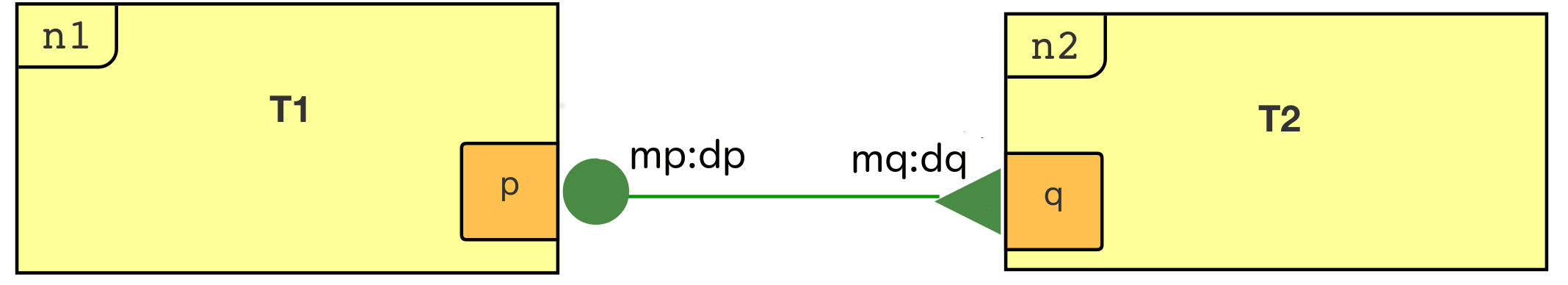}
	\caption{A BIP architecture diagram}  	
  	\label{fig:ex1}
\end{figure}
\begin{figure} [t]
	\centering
	\includegraphics[scale=1]{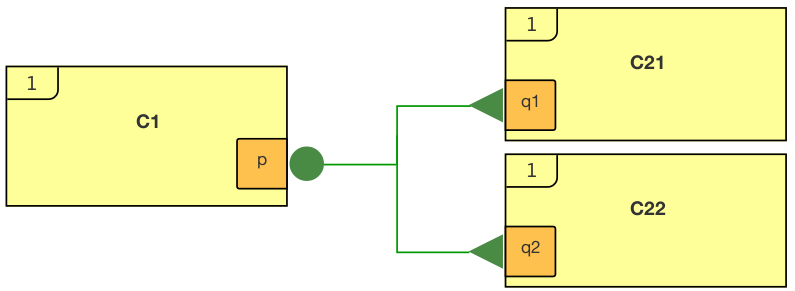}
  	\caption{A conforming architecture to the diagram in \fig{ex1}}
  	\label{fig:obtArchEx1}
\end{figure}

Connector motifs are non-empty sets of port types.
Each port type $p$ in a connector motif has two constraints represented as a pair $m:d$. Multiplicity $m$ of a port type constrains the number of port instances of this type that are involved in each connector defined by the connector motif. Degree $d$ of a port type constrains the number of connectors attached to every port instance of this type. Additionally, each port type has a typing (attribute) represented by $t_p$, which can be either \emph{trigger} (represented by a triangle) or \emph{synchron} (represented by a bullet) (see BIP connectors in Section \ref{secn:bip}). A connector motif  defines a set of possible configurations, where a configuration is a non-empty set of connectors. The meaning of a diagram is the union of all configurations corresponding to each connector motif of the diagram. Let us present the semantics of connector motifs through the example of \fig{ex1}, which has a single connector motif involving port types $p$ and $q$. 

\fig{obtArchEx1} shows the unique configuration obtained from the diagram of \fig{ex1} by taking $n_1=1$, $m_p=1$, $d_p=1$; $n_2=2$, $m_q=2$ and $d_q=1$. This is the result of composition of constraints for port types $p$ and $q$. 
 For instance, since the multiplicity of $q$ is $2$, then both $q_1$ and $q_2$ must be involved in the the same connector. The degrees of $p$ and $q$ are equal to $1$, thus there is exactly one connector attached to their port instances. Port instances retain the typing of their corresponding port types. The set of interactions defined by the connector in \fig{obtArchEx1} is the following: $\{q_1, q_2, q_1q_2, q_1p, q_2p, q_1q_2p\}$. 



Formally, a {\em BIP architecture diagram} $\cD\ = \langle \cT, \cC \rangle$ consists of:
\begin{itemize}
\item a set of \emph{component types} $\cT = \{T_1, \dots, T_k\}$ of the form $T = (T.P, n)$, where $T.P \neq \emptyset$ is the set of \emph{port types} of component type $T$ and $n \in \sN$ is the \emph{cardinality} parameter associated to component type $T$
\item a set of \emph{connector motifs} $ \cC = \{\cm_1, \dots, \cm_l\}$ of the form  $\cm =(a, \{m_p:d_p,\ t_p\}_{p \in a})$, where
\begin{itemize}
\item $\emptyset \neq a \subset \bigcup_{i=1}^k T_i.P$ is a set of port types
\item  $m_p, d_p \in \sN$ (with $m_p > 0$) are the \emph{multiplicity} and \emph{degree} associated to port type $p \in a$
\item $t_p \in \{synchron,\ trigger\}$ is the typing of port type $p \in a$
\end{itemize}
\end{itemize}


For a component $c \in \cB$ and a component type $T$, we say that {\em $c$ is of type $T$} if the ports of $c$ are in a bijective correspondence with the port types in $T$.

An architecture $\langle \cB, \gamma \rangle$ \emph{conforms} to a diagram $\langle \cT, \cC \rangle$  if, for each $i \in [1,k]$, 
  the number of components of type $T_i$ in $\cB$ is equal to $n_i$ and
  $\gamma$ can be partitioned into disjoint sets $\gamma_1,\dots,\gamma_l$, such that, for each connector motif $\cm_i =(a, \{m_p:d_p,\ t_p\}_{p \in a}) \in \cC$ 
  and each $p \in a$, 
\begin{enumerate}
\item in each connector in $\gamma_{i}$ there are exactly $m_p$ instances of $p$ typed as $t_p$,
\item each instance of $p$ is involved in exactly $d_p$ connectors in $\gamma_{i}$
\end{enumerate}
The meaning of a BIP architecture diagram is the set of all architectures that conform to it.

\subsubsection{Conformance Conditions}
\label{sec:conformance}
DesignBIP encodes connector motifs in the Require/Accept macronotation (Section \ref{secn:bip}) in order to give the latter as input to the integrated JavaBIP-engine. Nevertheless, the semantic domains of BIP architecture diagrams and interaction logic (Section \ref{secn:bip}) do not coincide. An architecture diagram defines a set of configurations, whereas, an interaction logic formula defines exactly one configuration. Consider the architecture diagram shown in \fig{nontranslatablediagram} with a single connector motif, which defines two configurations: $\gamma_1 = \{p_1 q_1, p_2 q_2\}$ and $\gamma_2 = \{p_1 q_2, p_2 q_1\}$, and thus, cannot be encoded into interaction logic. 
Let us now consider the architecture diagram shown in Figure 
\ref{fig:translatable2}, which is a variation of the diagram shown in \fig{nontranslatablediagram} with degrees set to $d_p=d_q=2$. This diagram defines exactly one configuration and thus, can be encoded into interaction logic. In particular, it defines the configuration $\gamma = \{p_1 q_1, p_2 q_2, p_1 q_2, p_2 q_1\}$.  This shows that we can restrict an architecture diagram to define exactly one configuration by constraining its multiplicities or degrees.

\begin{figure}
\centering
\includegraphics[scale=1]{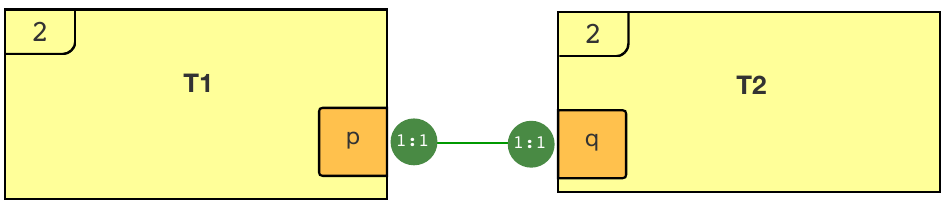}
\caption{An architecture diagram that cannot be encoded into FOIL}
\label{fig:nontranslatablediagram}
\end{figure}

\begin{figure} [t]
	\centering
	\includegraphics[scale=1]{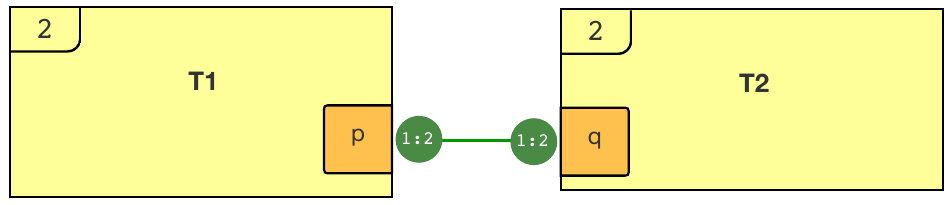}
  	\caption{An architecture diagram that can be encoded into FOIL}
  	\label{fig:translatable2}
\end{figure}
 
We denote $s_p = n_p \cdot d_p / m_p\ \in \mathbb{N}$ the \emph{matching factor} of a port type $p$, where $n_p$ is the cardinality of the component type that contains $p$. The matching factors of all port types participating in the same connector motif must be equal integers, in which case they represent the number of connectors defined by the connector motif. The maximum number of distinct connectors defined by a connector motif $\cm = (a, \{m_p:d_p,\ t_p\}_{p \in a})$ is equal to $\prod_{q \in a} {\binom{n_q}{m_q}}$. 
Consider the connector motif shown in \fig{nontranslatablediagram}. The matching factors of its port types are $s_p = s_q = 2$ and are not equal to ${\binom{2}{1}} \cdot {\binom{2}{1}}$, which represents the maximum number of connectors that can be defined by this connector motif. The matching factors of the connector motif shown in Figure \ref{fig:translatable2} is $4$. 

\prop{srequirement} provides the necessary and sufficient conditions for a BIP diagram to define exactly one conforming architecture for each evaluation of its cardinality parameters. If these conditions hold, then the diagram can be encoded into FOIL. The encoding conditions are as follows: 1) the multiplicity of a port type must be less than or equal to the number of component instances that contain this port and 2) the matching factors of all port types participating in the same connector motif must be equal to the maximum number of connectors that the connector motif defines. Since, by the semantics of diagrams, connector motifs correspond to disjoint sets of connectors, these conditions are applied separately to each connector motif. 
The proof of Proposition \ref{prop:srequirement} can be found in the technical report~\cite{mavridouDesignBIPReport}. Corollary \ref{col:srequirement} follows directly from Proposition \ref{prop:srequirement}.

\begin{proposition}
\label{prop:srequirement}
A BIP architecture diagram has exactly one conforming architecture iff, for each connector motif $\cm = (a, \{m_p:d_p,\ t_p\}_{p \in a})$ and each $p \in a$, we have 1) $m_p \le n_p$ and 2)~$s_p = \prod_{q \in a} {\binom{n_q}{m_q}}$.
\end{proposition}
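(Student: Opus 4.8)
The plan is to fix a single connector motif and reduce the global statement to it. Since, as noted just above the proposition, distinct connector motifs correspond to disjoint sets of connectors, a conforming architecture is obtained by independently choosing, for each motif, a valid configuration $\gamma_i$ of that motif, and $\gamma = \gamma_1 \cup \dots \cup \gamma_l$. Hence the number of conforming architectures (for fixed cardinalities) is the product over all motifs of the number of valid configurations of each motif, and this product equals one exactly when every factor equals one. It therefore suffices to prove, for a single motif $\cm = (a, \{m_p:d_p,\ t_p\}_{p\in a})$, that it admits exactly one valid configuration iff $m_p \le n_p$ and $s_p = \prod_{q\in a}\binom{n_q}{m_q}$ for every $p \in a$.

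Next I would set up the combinatorial model. A single connector compatible with $\cm$ amounts to a choice, for each port type $p \in a$, of an $m_p$-element subset of the $n_p$ instances of $p$ (the typing $t_p$ imposes no counting constraint, since every instance of $p$ carries that typing); write $U$ for the set of all such connectors, so $|U| = \prod_{q\in a}\binom{n_q}{m_q}$, the maximum announced before the proposition. A valid configuration is a non-empty $\gamma \subseteq U$ in which every instance of every port type $p$ lies in exactly $d_p$ connectors. The first routine step is a double-counting identity: counting incidences between $p$-instances and connectors of $\gamma$ in two ways gives $|\gamma|\, m_p = n_p d_p$, hence $|\gamma| = s_p$ for every $p$. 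In particular all matching factors coincide, they equal the common size of any valid configuration, and $|\gamma| = s_p \le |U|$ always.

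For sufficiency, assume $m_p \le n_p$ and $s_p = |U|$ for all $p$. I would first verify that $U$ itself is valid: a fixed instance of $p$ occurs in $\binom{n_p-1}{m_p-1}\prod_{q\ne p}\binom{n_q}{m_q} = \tfrac{m_p}{n_p}\,|U| = \tfrac{m_p}{n_p}\,s_p = d_p$ connectors, using $s_p = |U|$ and $s_p = n_p d_p / m_p$. Then, for any valid $\gamma$, the double-counting step gives $|\gamma| = s_p = |U|$, and since $\gamma \subseteq U$ this forces $\gamma = U$; thus $U$ is the unique valid configuration.

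The necessity direction, and in particular forcing condition~2, is the main obstacle. Existence of any configuration requires $U \neq \emptyset$, i.e.\ $\binom{n_p}{m_p} > 0$, which yields $m_p \le n_p$. For the matching-factor condition I would exploit symmetry: the group $\prod_{p\in a} S_{n_p}$ permuting the instances within each port type acts on $U$, and this action is transitive, since any two $m_p$-subsets of the $n_p$ instances are related by some element of $S_{n_p}$, so any connector can be carried to any other. Because the degree requirement is invariant under relabelling instances, this action maps valid configurations to valid configurations, and by transitivity the only subsets of $U$ fixed by the whole group are $\emptyset$ and $U$. Hence if the unique valid configuration $\gamma$ were a proper non-empty subset of $U$, some group element would send it to a distinct valid configuration, contradicting uniqueness; therefore $\gamma = U$, whence $s_p = |\gamma| = |U| = \prod_{q\in a}\binom{n_q}{m_q}$, which is condition~2. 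I expect this transitivity-and-invariance argument to be the crux, with the double counting and the degree computation for $U$ serving as the routine supporting steps.
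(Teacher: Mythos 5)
Your proof is correct, and while the overall skeleton matches the paper's (reduce to a single motif, double-count incidences to get $|\gamma| = s_p$ for every valid configuration $\gamma$, and show that the full set $U$ of $\prod_{q\in a}\binom{n_q}{m_q}$ possible connectors is the unique candidate), the necessity direction is argued by a genuinely different route. The paper shows that $|\gamma| < |U|$ forces a second conforming architecture by an explicit construction: it picks a missing connector $k^*$ and a connector $k'\in\gamma$, and repeatedly swaps port instances between $k'$ and other connectors of $\gamma$ --- relying on a separate counting lemma guaranteeing that a suitable partner connector exists --- until $k'$ is transformed into $k^*$, yielding a distinct conforming architecture. You instead observe that the relabelling group $\prod_{p\in a} S_{n_p}$ acts transitively on $U$ and preserves validity, so the unique valid configuration must be invariant, hence equal to $\emptyset$ or $U$, and nonemptiness forces $U$. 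Your symmetry argument is shorter and sidesteps the lemma and the bookkeeping of the iterative swap (termination, preservation of conformance at each step), at the cost of being less constructive: the paper actually exhibits the second architecture. You also explicitly verify that $U$ itself satisfies the degree constraints via $\binom{n_p-1}{m_p-1} = \tfrac{m_p}{n_p}\binom{n_p}{m_p}$, i.e.\ you prove existence, which the paper's sufficiency direction leaves implicit. Both proofs share the same mild gloss in reducing the full diagram to independent motifs (disjointness of the $\gamma_i$ and the passage from tuples of configurations to their union), so that is not a point against you.
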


\begin{corollary}
\label{col:srequirement}
A BIP architecture diagram can be specified in FOIL using the Require/Accept macro-notation iff, for each connector motif $\cm = (a, \{m_p:d_p,\ t_p\}_{p \in a})$ and each $p \in a$, we have 1) $m_p \le n_p$ and 2) $s_p = \prod_{q \in a} {\binom{n_q}{m_q}}$.
\end{corollary}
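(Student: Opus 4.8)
The plan is to fix a single connector motif $\cm = (a, \{m_p:d_p,\ t_p\}_{p\in a})$ and to argue motif-by-motif: the semantics partitions any conforming $\gamma$ into disjoint sets $\gamma_1,\dots,\gamma_l$, one per motif, so the diagram has a unique conforming architecture precisely when each $\gamma_i$ is uniquely determined. First I would record the governing double-counting identity. Every connector in $\gamma_i$ contains exactly $m_p$ instances of $p$, while every instance of $p$ lies in exactly $d_p$ connectors; counting incidences between $p$-instances and connectors in two ways yields $|\gamma_i|\cdot m_p = n_p\cdot d_p$, hence $|\gamma_i| = s_p$ for each $p\in a$. Thus a conforming configuration can exist only if all $s_p$ coincide in a common integer value $s$, and since the connectors of $\gamma_i$ are distinct tuples of port-instance choices, $s \le N$, where $N := \prod_{q\in a}\binom{n_q}{m_q}$ is the number of distinct connectors the motif admits.

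For the ``if'' direction I would note that condition (1), $m_p\le n_p$, makes each $\binom{n_q}{m_q}$ positive, and condition (2) reads $s = N$. Since $\gamma_i$ consists of $s=N$ distinct connectors drawn from a pool of exactly $N$, it must be the \emph{complete} configuration $\gamma^*$ containing every admissible connector. It then remains to verify that $\gamma^*$ is conforming: the number of connectors of $\gamma^*$ using a fixed instance of $p$ is $\binom{n_p-1}{m_p-1}\prod_{q\neq p}\binom{n_q}{m_q}$, and rewriting condition (2) via $\tfrac{m_p}{n_p}\binom{n_p}{m_p} = \binom{n_p-1}{m_p-1}$ shows this equals $d_p$. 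Hence $\gamma^*$ satisfies both conformance clauses, and it is the unique conforming architecture.

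For the ``only if'' direction I argue contrapositively. Existence of a (necessarily non-empty) conforming configuration already forces $m_p\le n_p$, since a single connector requires $m_p$ distinct instances of $p$, and it also gives $s=|\gamma_i|\ge 1$. By the double-counting step $s\le N$; suppose condition (2) fails, so $0 < s < N$. The key device is the symmetry group $G = \prod_{q\in a} S_{n_q}$ that independently permutes the instances of each component type. This $G$ acts on the pool of connectors, and because $S_{n_q}$ is transitive on the $m_q$-element subsets of the $n_q$ instances of the type containing $q$, the action is \emph{transitive on the whole pool of $N$ connectors}; moreover it preserves conformance, since relabelling instances leaves every multiplicity and degree unchanged. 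Any conforming $\gamma_i$ with $0<s<N$ is neither empty nor the complete set $\gamma^*$, hence it is not fixed by all of $G$ (having a single orbit, the only $G$-invariant sets of connectors are $\emptyset$ and $\gamma^*$). Therefore some $\sigma\in G$ produces a conforming configuration $\sigma(\gamma_i)\neq\gamma_i$, contradicting uniqueness; this is exactly the mechanism behind the two configurations $\{p_1q_1,p_2q_2\}$ and $\{p_1q_2,p_2q_1\}$ of \fig{nontranslatablediagram}.

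The step I expect to be the crux is this final non-uniqueness argument. A direct ``local swap'' of two connectors becomes awkward once general multiplicities $m_q\ge 1$ and several port types are present, so the real work is to replace it by the orbit argument above: establishing transitivity of $G$ on connectors together with $G$-invariance of conformance, whence non-uniqueness below the maximum is immediate. The remaining ingredients---the double counting and the coverage computation for $\gamma^*$---are routine.
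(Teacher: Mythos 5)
Your argument is correct, and it establishes the underlying Proposition~\ref{prop:srequirement} (unique conforming architecture $\Leftrightarrow$ conditions 1--2), from which the paper derives the corollary by the same unstated bridge you use, namely that FOIL/Require--Accept encodability amounts to the diagram pinning down exactly one configuration. The skeleton coincides with the paper's proof: the incidence double-count giving $|\gamma_i|=s_p$, the bound $s\le N=\prod_{q\in a}\binom{n_q}{m_q}$, and sufficiency via the complete configuration. The genuine divergence is in the necessity step. The paper constructs a second conforming architecture by hand: it picks a missing connector $k^*$, and repeatedly exchanges port instances between an existing connector and others, relying on an auxiliary lemma (if some connector contains $p_i$ but not $p_j$, another contains $p_j$ but not $p_i$) to find swap partners; this requires iterated bookkeeping and quietly assumes the swaps never collapse two connectors into one. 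You replace this with a symmetry argument: the instance-relabeling group acts transitively on the pool of $N$ admissible connectors and preserves conformance, so the only invariant configurations are $\emptyset$ and the complete one, and any $\gamma_i$ with $0<s<N$ has a distinct conforming image. This is shorter, avoids the swap bookkeeping entirely, and generalizes cleanly to arbitrary multiplicities. One caution: describe $G$ as independently permuting the instances of each \emph{port type} $q\in a$, not of each component type --- if a motif contains two port types of the same component type, the component-relabeling action is not transitive on the connector pool, whereas independent per-port-type permutations are, and they still preserve conformance because the conformance clauses only count per-port-type occurrences. A further point in your favor: your sufficiency direction actually checks that the complete configuration $\gamma^*$ \emph{is} conforming, via $d_p=\binom{n_p-1}{m_p-1}\prod_{q\ne p}\binom{n_q}{m_q}$; the paper's sufficiency argument only shows that a conforming configuration, if it exists, must equal $\gamma^*$, and omits this existence check.
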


\section{Service Integration Components}

We present the model and code editors, the code generators, and the model repositories of DesignBIP. 

\subsection{Model and Code editors}


  \begin{figure} [t]
 	\centering
 	\includegraphics[scale=0.5]{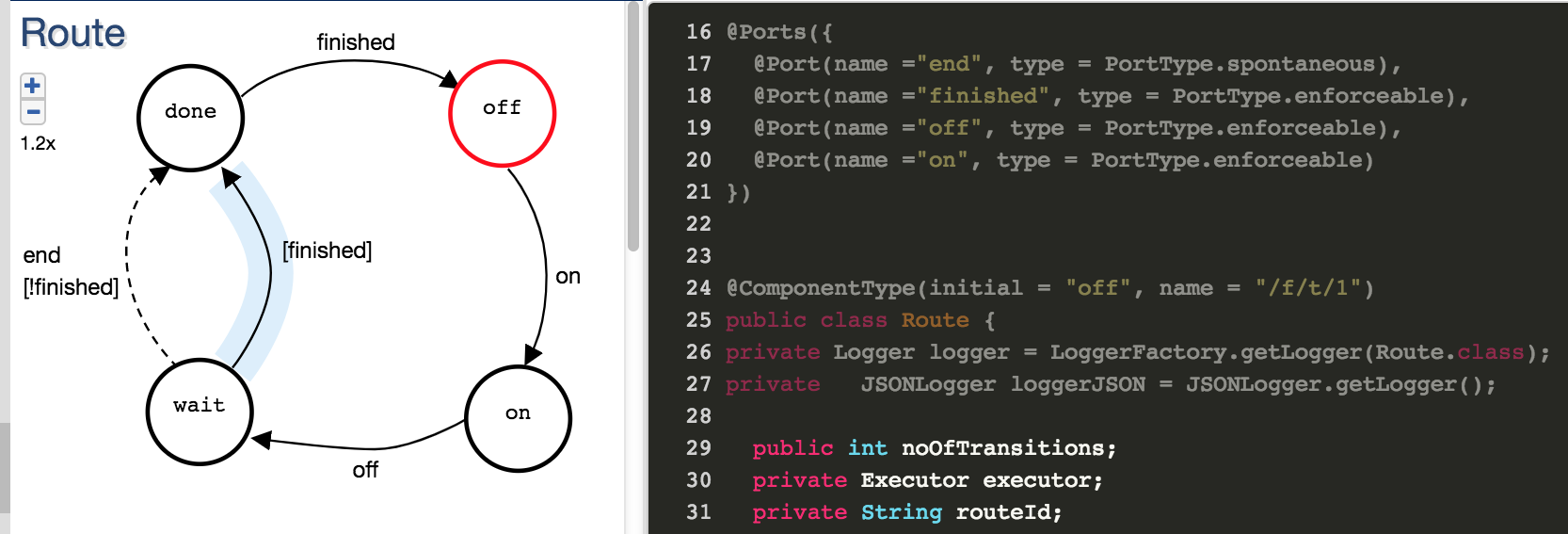}
 	\caption{DesignBIP LTS model editor and Java code editor}
   	\label{fig:editors}
 \end{figure}

A developer provides the system specification by using the dedicated model and Java code editors of DesignBIP. In particular, the developer must specify 1)~component behavior in the form of BIP LTS, 2)~component interaction in the form of BIP architecture diagrams and 3)~the actions associated with transitions and guards, as well as variable declarations directly in Java.
Figures \ref{fig:editors} and \ref{fig:editors2} present the DesignBIP LTS and BIP diagram model editors, as well as the Java code editor. In the code editor, the darker parts represent code that was automatically generated by the input given in the model editors, while the bright code parts represent input given directly in the code editor. In the LTS model editor, enforceable and internal transitions are illustrated with solid arrows, while spontaneous transitions are illustrated with dashed arrows. The code and model editors are tightly synchronized, \ie changes are instantaneously propagated.

   \begin{figure} 
 	\centering
 	\includegraphics[scale=0.45]{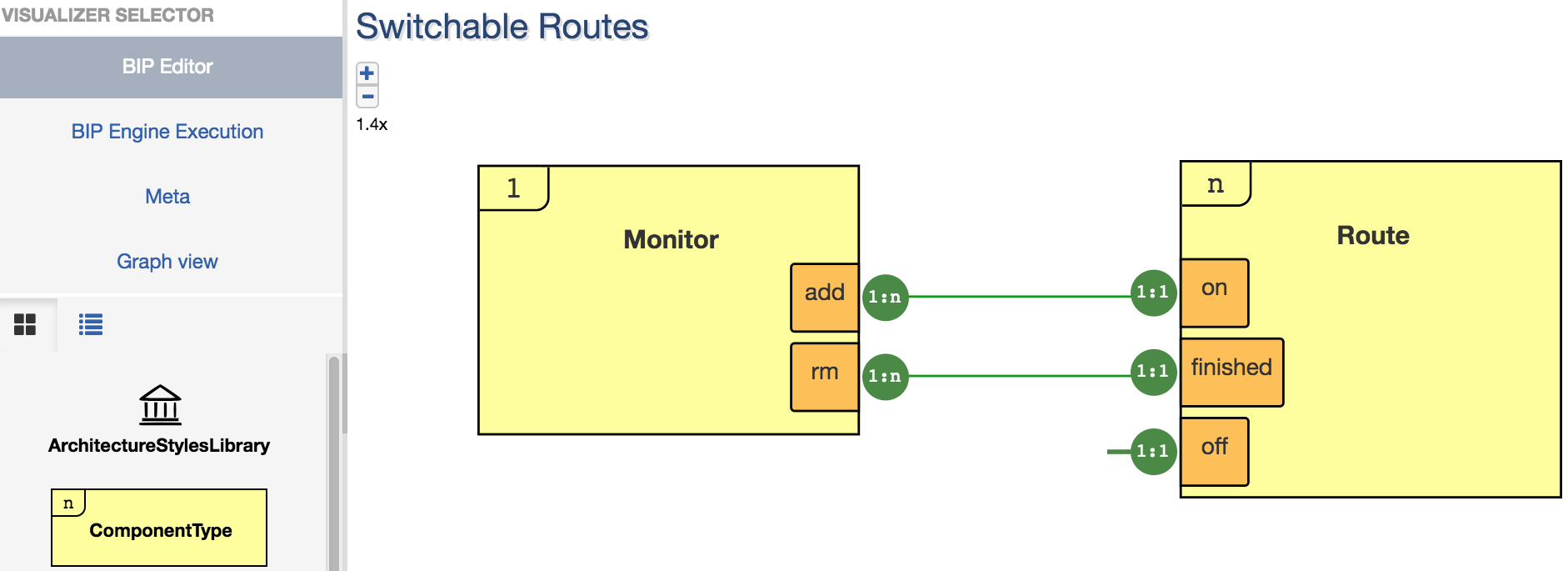}
 	\caption{DesignBIP architecture diagrams model editor}
   	\label{fig:editors2}
 \end{figure}

\subsection{Behavior generation plugin: LTS to Java code}

As explained earlier, a developer graphically specifies the LTS that represents component behavior. DesignBIP generates the Java code that describes the LTS specified by the developer. In particular, DesignBIP generates code in the form of Java annotations that describes the ports, component types, transitions, and guards of each LTS. For instance, Java annotations describing the ports of the \texttt{Route} component type (Figure~\ref{fig:editors2}) are shown in the right-hand side of Figure~\ref{fig:editors}. 

Firstly, before the plugin generates the Java code, it checks the correct instantiation of each specified LTS according to the constraints defined in the DesignBIP metamodel, which can be found in the technical report~\cite{mavridouDesignBIPReport}. For instance, the plugin checks whether each LTS has exactly one initial state. If errors exist, DesignBIP returns to the developer a message explaining the error and pointers (displayed as \texttt{Show node}) to the incorrect nodes of the specified model as shown in Figure~\ref{fig:errors}. In the case of a correct behavioral model, DesignBIP returns a set of Java files, i.e., one Java file for each specified component type. The complete generated Java annotations for the \texttt{Route} component type can be found in the technical report~\cite{mavridouDesignBIPReport}.

\begin{figure}
  \centering
  \includegraphics[scale=0.52]{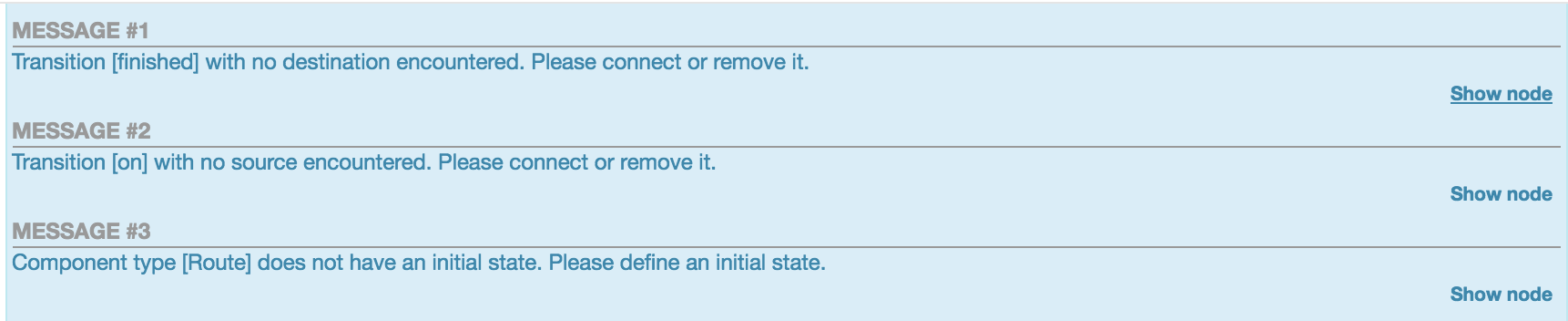}
  \caption{Behavioral errors as returned by DesignBIP}
  \label{fig:errors}
\end{figure}



\subsection{Interaction generation plugin: BIP architecture diagrams to XML code}

We propose Algorithm~\ref{alg:macrosencoding}, with polynomial-time complexity for the encoding of a BIP architecture diagram into Require/Accept macros (Section \ref{secn:bip}). For each port type, we instantiate two sets of variables: \texttt{require} and \texttt{accept}. For the sake of simplicity, we write $p$ instead of $T.p$.

\begin{algorithm}
 \footnotesize
\KwData{Diagram $\cD\ =\ \langle \cT, \cC \rangle$, where $\cC = \{ \Gamma_1, \dots, \Gamma_n \}$ and $\cm =(a, \{m_p:d_p,\ t_p\}_{p \in a})$}
\KwResult{Returns the macros for each port type in $\cD$}

$require  \longleftarrow \{\};$ 
$accept \longleftarrow \{\};$\\
/* for each port type p in the diagram */ \newline
\For{$p \in$ $\cT.P$}{
  $require[p] = $ $new\ Set\ ()$;
  $accept[p] = $ $new\ Set\ () $; \newline
  /* for all connector motifs attached to p */ \newline
  \For{$\Gamma \in p.connectorMotifs$}{
  /* if the connector motif is singleton */ \newline
    \If{$|a| == 1$}{
      $require[p].add(-)$;  $accept[p].add(-)$; 
    }
    \Else{
       /* if the multiplicity of end attached to $p$ is not $1$,
       add all ports of the connector motif */ \newline
      \If{$m_p > 1$}{
        $accept [p].add(a);$
      }
      /* otherwise add all ports excluding $p$ */ \newline
      \Else {
        $accept [p].add(a \setminus \{p\});$
      }
      /* if the end attached to $p$ is trigger */ \newline
     \If{$t_p ==\ $\textnormal{trigger}}{
      $require[p].add(-);$ \newline
      }
      /* else if there exists at least one trigger */ \newline
      \ElseIf{$\exists\ p \in a\ : t_p\ == \textnormal{trigger}$}{
      	\For{$q \in a$ $\wedge$ $t_q ==\ \textnormal{trigger}$}{
			/* for each trigger add an option */ \newline
				 $require$[$p$].$add(p);$ 
        }
      }
      /* else add all ports as many times as their multiplicity */ \newline
      \Else {
  		$optionRequire[p] = new List()$; \newline
        \For{$q \in a \setminus \{p\}$}{
               $optionRequire [p].add(\underbrace{qq\ldots q}_{m_q});$
        }
        $optionRequire[p].add(\underbrace{pp\ldots p}_{m_p - 1});$ \newline
        $require$[$p$].\textit{add(optionRequire}[$p$]);     
       } 
    }
  }
}
\Return $require$ and $accept$;
\caption{Encoding a BIP Diagram into Require/Accept Macros}
\label{alg:macrosencoding}
\end{algorithm}

The \texttt{accept} set of $p$ contains the right hand side of \accept and  is constructed as follows. For each connector motif attached to $p$, if its size is: 1)~equal to $1$, \ie singleton connector motif, then we add - in \texttt{accept}\footnote{The dash - indicates that $p$ must not synchronize with any other port.}; 2)~greater than $1$ and the multiplicity of $p$ is greater than $1$, we add in \texttt{accept} all port types of the connector motif including $p$; 3)~greater than $1$ and the multiplicity of $p$ is equal to $1$, we add all port types of the connector motif except for $p$ to \texttt{accept}.

The \texttt{require} set of $p$ contains the right hand side of~\require and is constructed as follows. For each connector motif attached to $p$, if its size is: 1)~ equal to $1$ or $p$ is typed as trigger then we add - to \texttt{require}\footnote{The dash - indicates that $p$ does not require any other port for synchronization.}; 2)~greater than $1$ and there exists at least one trigger, we add to \texttt{require} as many options as the number of triggers. In each option we add a trigger; 3)~greater than $1$ and there are no triggers, we add to \texttt{require} all port types of the connector motif except for $p$ as many times as their associated multiplicity and $m_p - 1$ times the port type $p$, to form a single option. 

Before generating the XML code, DesignBIP checks the conformance conditions presented in Section~{ref:conformance}. Additionally, DesignBIP checks the correct instantiation of the multiplicity and degree constraints of each connector motif. If errors exist, DesignBIP returns to the developer messages explaining the errors and pointers to the incorrect nodes of the model. In the case of a correct interaction model, DesignBIP returns an XML file with the generated code.
Part of the generated XML code for the \texttt{Switchable Routes} example (Figure~\ref{fig:editors2}) can be found in the technical report~\cite{mavridouDesignBIPReport}. 

\subsection{Model repositories}
To promote reusability in DesignBIP, each project is accompanied by component type and coordination pattern~\cite{mavridou2016satellite} repositories. For instance, let us consider the mutual exclusion coordination pattern shown in Figure~\ref{fig:mutex} that enforces the \textit{no two processes can use the shared resource simultaneously} coordination property. 
The shared resource is managed by the unique \mdash due to the cardinality being $1$ \mdash \texttt{Mutex Manager} component type. The multiplicities of all port types are $1$ and therefore, all connectors are binary. The degree constraints require that each port instance of a component of type \texttt{Process} be attached to a single connector and each port instance of the \texttt{Mutex Manager} be attached to $n$ connectors. The behaviors of the two component types enforce that once the resource is acquired by a component of type \texttt{Process}, it can only be released by the same component. 

\begin{figure}
  \centering
  \includegraphics[scale=1.8]{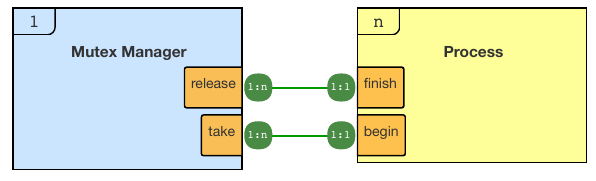}
  \caption{The mutual exclusion pattern}
  \label{fig:mutex}
\end{figure}

To use a coordination pattern, a developer needs to create an instance of the pattern in the model and evaluate its cardinality parameters. For instance, if the developer wants to enforce mutual exclusion on two instances of \texttt{Process} then $n$ must be set equal to $2$. 

\section{JavaBIP-engine Execution and Visualization}
\label{secn:engine}
After generating the system specification, the developer may use the integrated JavaBIP-engine to execute it. 
The JavaBIP-engine is offered through a dedicated plugin in DesignBIP. If the cardinality parameters of the component types have not been evaluated, the plugin asks the developer to provide the number of instances of each component type. It then instantiates the components and passes their reference to the JavaBIP-engine alongside with the generated Java and XML code. 
The plugin starts the JavaBIP-engine that runs the following three-step protocol in a cyclic manner:
1)~upon reaching a state, each component notifies the JavaBIP-engine about possible outgoing transitions, 2)~the JavaBIP-engine computes the possible interactions of the system, picks one, and notifies the involved components, 3)~the notified components execute the functions associated with the corresponding transitions. 

The output of the JavaBIP-engine (which transitions are picked at each execution cycle) is stored as a JSON object. When the plugin stops the execution of the engine (the execution time is defined by the developer), the output is sent back to the model editors of DesignBIP for simulation (see Figure~\ref{fig:engine}). Initially, the developer picks the subset of components whose execution wants to simulate. Starting by highlighting the initial states of these components, the visualizer shows which transitions are executed in each execution cycle by firstly highlighting the fired transitions and finally their destination states. 

\begin{figure}
  \centering
  \includegraphics[scale=0.42]{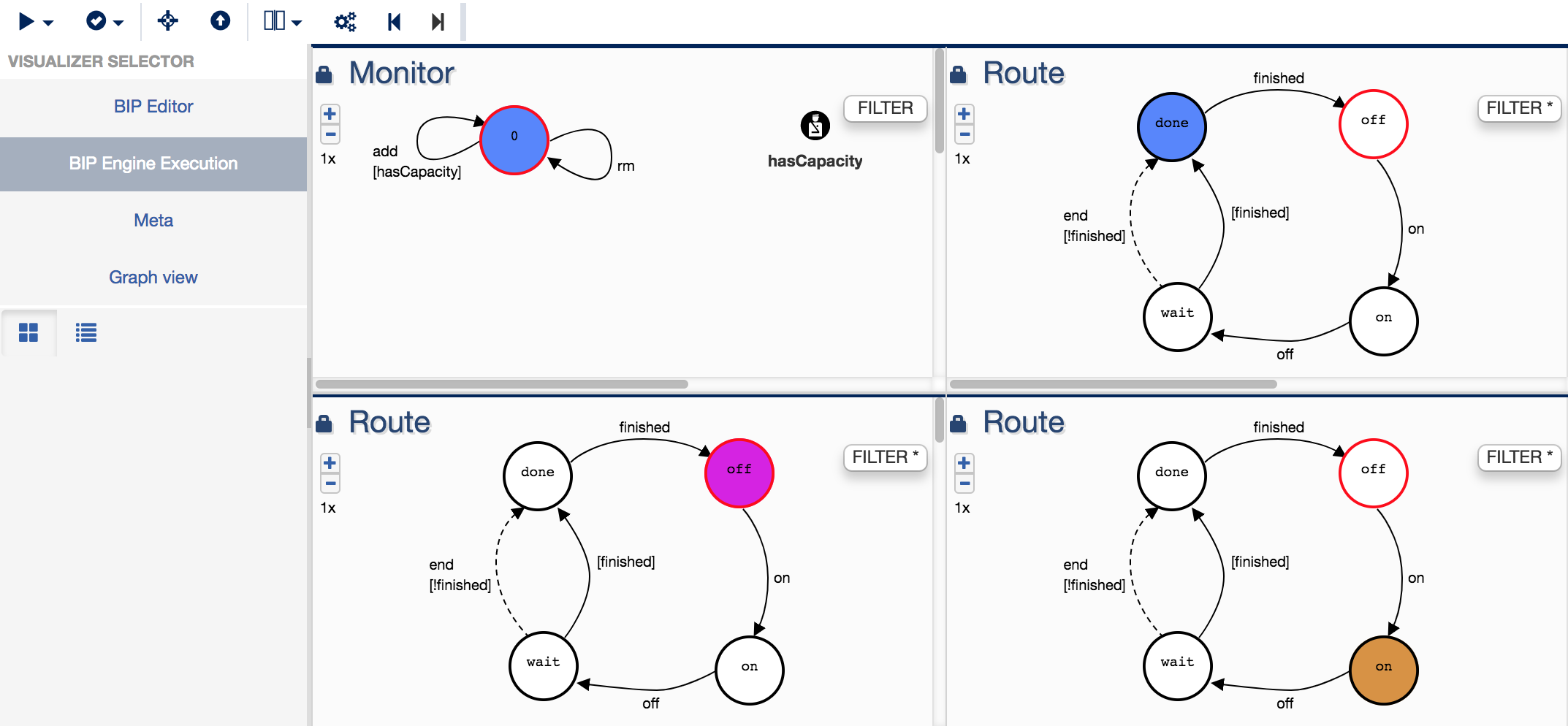}  \caption{Visualization of the execution of the \texttt{Switchable Routes example}}
  \label{fig:engine}
\end{figure}

\section{Related Work}
Model-driven component-based software engineering and development~\cite{beydeda2005model,heineman2001component,clemens1998component} has become an accepted practice for tackling software complexity in large-scale systems. It provides mechanisms to support design at the right level of abstraction, error detection, tool integration, verification and maintenance. Systems are built by composing and reusing small, tested building blocks called components.  

The Generic Modeling Environment (GME) and its successor WebGME are open source Model Integrated Computing (MIC) tools developed for creating domain specific modeling environments and has been effectively applied to a number of domains~\cite{stankovic2003vest, thramboulidis2005model, neema2016c2wt, beccani2015systematic}.

Close to our approach is the ROSMOD design studio~\cite{rosmod} that also relies on WebGME for collaborative code development and model editing features. The basic building blocks of ROSMOD are specified in its metamodel which is described in UML class diagrams~\cite{bell2003uml}. The code development and compilation process have been integrated in the graphical user interface to keep the framework self-sufficient. ROSMOD integrates code development, code generation, compilation, run-time monitoring, and execution time plot generation. Nevertheless, in ROSMOD component behavior is defined directly with code and thus connection to verification tools is not supported.

A plethora of approaches exists for architecture specification. Patterns~\cite{daigneau2011service,Hohpe:2003:EIP:940308} are commonly used for specifying architectures in practical applications. The specification of architectures is usually done in a graphical way using general purpose graphical tools. Such specifications are easy to produce but their meaning may not be clear since the graphical conventions lack formal semantics and thus are not amenable to formal analysis. 
Significant work has been done by the Architecture Description Languages (ADLs) community. Many ADLs have been developed for architecture specification~\cite{medvidovic2000classification,  ozkaya2013we} with rigorous semantics that facilitate communication of system properties and allow system analysis. 
Nevertheless, according to~\cite{malavolta2013industry}, architectural languages used in practice mostly originate from industrial development (e.g., UML) instead of academic research (e.g., ADLs).  Scientific questions remain about UML's formal properties~\cite{harel2004meaningful}. The use of UML has been demonstrated in~\cite{clements2002documenting, ivers2004documenting} for representing architectural concepts with a focus on the component and connector view. However, exploiting these constructors to express architecture views may result in a proliferation of models and stereotypes, which can be difficult to integrate into a well-structured code generation process. On the other hand, ADLs with formal semantics require the use of formal languages which are considered as challenging for practitioners to master~\cite{malavolta2013industry}. We chose architecture diagrams, which rely on a small set of notions and combine the benefits of graphical languages and rigorous formal semantics. 


\section{Conclusion}

We presented DesignBIP\footnote{https://cps-vo.org/group/BIP}, which is a web-based, open source design studio\footnote{https://github.com/anmavrid/DesignBIP} for modeling and generating BIP systems. 
To define system coordination aspects, we used a parameterized graphical language with formal semantics called architecture diagrams, which we extended with BIP coordination primitives. Designing and reusing models that are based on types and not on instances allowed us to cope with the issues of modeling complexity and size. We have implemented dedicated model/code editors, visualizers, as well as integrated the JavaBIP-engine. 
%
Additionally, we studied model transformations and implemented dedicated code generation plugins.  We have opted for generating code from high-level graphical structures to avoid tedious and error-prone development of Boolean formulas. Rooting the whole modeling and execution process in rigorous semantics allows the connection to checkers and analysis tools.  In the future, we are going to integrate data transfer information on connector motifs. We are also going to develop code generators for the BIP1 and BIP2 languages and  integrate verification tools.


\section*{Acknowledgements}
This research is supported by the National Science Foundation under award \# CNS-1521617. The authors would like to thank Hoang-Dung Tran and Venkataramana Nagarajan for helping with the implementation of DesignBIP.

\bibliographystyle{eptcs}
\bibliography{bip,bipExtended,cl,webgme}

\end{document}